\newtheorem{theorem}{\it Theorem}
\newtheorem{definition}{\it Definition}
\newtheorem{corollary}{\it Corollary}
\title{\LARGE \bf
	Fundamental Limits of Obfuscation for Linear Gaussian Dynamical Systems: An Information-Theoretic Approach
}
\author{Song Fang$^{1}$ and Quanyan Zhu$^{1}$
\thanks{$^{1}$ Song Fang and Quanyan Zhu are with the Department of Electrical and Computer Engineering, New York University, New York, USA
        {\tt\small song.fang@nyu.edu; quanyan.zhu@nyu.edu}}%
}
\begin{document}

\maketitle
\thispagestyle{empty}
\pagestyle{empty}

\begin{abstract}

In this paper, we study the fundamental limits of obfuscation in terms of privacy-distortion tradeoffs for linear Gaussian dynamical systems via an information-theoretic approach. Particularly, we obtain analytical formulas that capture the fundamental privacy-distortion tradeoffs when privacy masks are to be added to the outputs of the dynamical systems, while indicating explicitly how to design the privacy masks in an optimal way: The privacy masks should be colored Gaussian with power spectra shaped specifically based upon the system and noise properties.

\end{abstract}

\section{INTRODUCTION}

Privacy in dynamical systems (see, e.g.,  \cite{venkitasubramaniam2015information, tanaka2017directed, han2018privacy, nekouei2019information, lu2019control, le2020differential, farokhi2020privacy} and the references therein) is a critical issue that is becoming more and more important nowadays, due to the ever-increasing amount of applications of cyber-physical systems. On the other hand, information-theoretic privacy (see, e.g., \cite{wyner1975wire, bloch2008wireless, liang2009information, liu2010securing, el2011network, bloch2011physical, sankar2011competitive, venkitasubramaniam2015information, han2016event, schaefer2017information, tanaka2017directed, nekouei2019information, farokhi2020privacy} and the references therein) is a fundamental privacy concept, whereas arguably the most commonly used information-theoretic measure of privacy leakage is mutual information (see, e.g., \cite{wyner1975wire, bloch2008wireless, liang2009information, liu2010securing, el2011network, bloch2011physical, sankar2011competitive, venkitasubramaniam2015information, han2016event, schaefer2017information, tanaka2017directed, nekouei2019information, farokhi2020privacy} and the references therein). Recent progress on information-theoretic privacy of dynamical systems includes, e.g., \cite{venkitasubramaniam2015information, tanaka2017directed, farokhi2020privacy} (see also the references therein). 
On the other hand, information-theoretic formulations of the privacy-distortion tradeoff (or, privacy-utility tradeoff) problems have been considered in, e.g., \cite{rebollo2009t, du2012privacy, sankar2012smart, sankar2013utility, makhdoumi2013privacy, CLarxiv} (see also the references therein) for static or time-series data; in this paper, we generalize the formulation to dynamical systems.




Particularly, we focus on analyzing the fundamental information-theoretic privacy-distortion tradeoffs for linear Gaussian dynamical systems. Consider the scenario in which a privacy mask is to be added to the output of a dynamical system, leading to a masked version of the output that is to be revealed to the public. Accordingly, we may view the state of the system as the private information, the original output of the system as the useful information, and the masked output as the disclosed information.
The information-theoretic privacy leakage is then  defined as the mutual information between the state of the system and the masked output, while the distortion is defined between the original output of the system and the masked output. As such, the following questions  naturally arise: What is the fundamental tradeoff between the state privacy leakage and the output distortion led to by the privacy mask? (Given a certain distortion constraint, what is the minimum privacy leakage? Or equivalently, given a certain privacy level, what is the minimum degree of distortion?) 
How to design the privacy mask in an optimal way?

The main contribution of this paper is to provide analytical solutions to the aforementioned questions via an information-theoretic approach. More specifically, by viewing the dynamical system with privacy masks as a ``virtual channel", we derive analytical formulas that capture the fundamental privacy-distortion tradeoffs, while indicating explicitly how to design the privacy masks in an optimal way: The privacy masks should be colored Gaussian with power spectra shaped specifically based upon the system and noise properties. In addition, the optimal solution mandates that more power should be delivered to frequencies at which the ``channel input" power spectra are larger, when above a threshold, whereas below that threshold, no power shall be allocated. In this sense, this solution may be viewed as a ``thresholded obfuscating" power allocation policy. 
We also present further discussions on the implications of the obtained results, including the connection with conditional entropy, the comparison with i.i.d. Gaussian privacy masks, and the investigation of some related problems.



The remainder of the paper is organized as follows. Section~II
introduces the technical preliminaries. Section~III presents the fundamental privacy-distortion tradeoffs for linear Gaussian dynamical systems, as well as solutions for the optimal privacy mask design. Conclusions are given in Section~IV.

%

\section{PRELIMINARIES}

Throughout the paper, we consider real-valued continuous random variables and random vectors, as well as discrete-time stochastic processes. All random variables, random vectors, and stochastic processes are assumed to be zero-mean. We represent random variables and random vectors using boldface letters. Given a stochastic process $\left\{ \mathbf{x}_{k}\right\}$, we denote the sequence $\mathbf{x}_0,\ldots,\mathbf{x}_{k}$ by $\mathbf{x}_{0,\ldots,k}$ for simplicity. The logarithm is with base $2$. A stochastic process $\left\{ \mathbf{x}_{k}\right\}, \mathbf{x}_k \in  \mathbb{R}^m$, is said to be  stationary if $ R_{\mathbf{x}}\left( i,k\right)=\mathbb{E}\left[  \mathbf{x}_i \mathbf{x}_{i+k}^{\mathrm{T}} \right]$ depends only on $k$, and can thus be denoted as  $R_{\mathbf{x}}\left( k\right)$ for simplicity. The power spectrum of a stationary  process $\left\{ \mathbf{x}_{k} \right\}, \mathbf{x}_{k} \in \mathbb{R}^m$, is defined as
\begin{flalign}
\Phi_{\mathbf{x}}\left( \omega\right)
=\sum_{k=-\infty}^{\infty} R_{\mathbf{x}}\left( k\right) \mathrm{e}^{-\mathrm{j}\omega k}. \nonumber
\end{flalign}
Particularly when $m=1$, $\Phi_{\mathbf{x}}\left( \omega\right)$ is denoted as $S_{\mathbf{x}}\left( \omega\right)$, and the variance of $\left\{ \mathbf{x}_{k}\right\}, \mathbf{x}_{k} \in \mathbb{R}$, is given by
\begin{flalign}
\sigma_{\mathbf{x}}^2
= \mathbb{E}\left[ \mathbf{x}_k^2 \right]
= \frac{1}{2\pi}\int_{-\pi}^{\pi} S_{\mathbf{x}}\left(\omega \right) \mathrm{d}  \omega. \nonumber
\end{flalign}


Entropy and mutual information are the most basic notions in information theory \cite{Cov:06}, which we introduce below.

\begin{definition} The differential entropy of a random vector $\mathbf{x} \in \mathbb{R}^m$ with density $p_{\mathbf{x}} \left(x\right)$ is defined as
	\begin{flalign}
	h\left( \mathbf{x} \right)
	=-\int p_{\mathbf{x}} \left(x\right) \log p_{\mathbf{x}} \left(x\right) \mathrm{d} x. \nonumber
	\end{flalign}
	The conditional differential entropy of random vector $\mathbf{x} \in \mathbb{R}^{m}$ given random vector $\mathbf{y} \in \mathbb{R}^{n}$ with joint density $p_{\mathbf{x}, \mathbf{y}} \left(x,y\right)$ and conditional density $p_{\mathbf{x} | \mathbf{y}} \left(x,y\right)$ is defined as
	\begin{flalign}
	h\left(\mathbf{x}\middle|\mathbf{y}\right)
	=-\int p_{\mathbf{x}, \mathbf{y}} \left(x,y\right)\log p_{\mathbf{x} | \mathbf{y}} \left(x,y\right) \mathrm{d}x\mathrm{d}y. \nonumber
	\end{flalign}
	The mutual information between random vectors $\mathbf{x} \in \mathbb{R}^{m}, \mathbf{y} \in \mathbb{R}^{n}$ with densities $p_{\mathbf{x}} \left(x\right)$, $p_{\mathbf{y}} \left( y \right) $ and joint density $p_{\mathbf{x}, \mathbf{y}} \left(x,y\right)$ is defined as
	\begin{flalign}
	I\left(\mathbf{x};\mathbf{y}\right)
	=\int p_{\mathbf{x}, \mathbf{y}} \left(x,y\right) \log \frac{p_{\mathbf{x}, \mathbf{y}} \left(x,y\right)}{p_{\mathbf{x}} \left(x\right) p_{\mathbf{y}} \left( y \right) }\mathrm{d}x\mathrm{d}y. \nonumber
	\end{flalign}
	The entropy rate of a stochastic process $\left\{ \mathbf{x}_{k}\right\},\mathbf{x}_{k}  \in \mathbb{R}^m$, is defined as
	\begin{flalign}
	h_\infty \left(\mathbf{x}\right)=\limsup_{k\to \infty} \frac{h\left(\mathbf{x}_{0,\ldots,k}\right)}{k+1}. \nonumber
	\end{flalign}
	The mutual information rate between two stochastic processes $\left\{ \mathbf{x}_{k}\right\},\mathbf{x}_{k}  \in \mathbb{R}^m$, and $\left\{ \mathbf{y}_{k}\right\},\mathbf{y}_{k}  \in \mathbb{R}^n$, is defined as
	\begin{flalign}
	I_{\infty}\left(\mathbf{x};\mathbf{y}\right)
	=\limsup_{k\to \infty} \frac{I \left(\mathbf{x}_{0,\ldots,k}; \mathbf{y}_{0,\ldots,k}\right)}{k+1}. \nonumber
	\end{flalign}
\end{definition}

\vspace{3mm}

Properties of these notions can be found in, e.g., \cite{Cov:06, yeung2008information, el2011network, fang2017towards}.

\section{FUNDAMENTAL LIMITS OF OBFUSCATION AND OPTIMAL PRIVACY MASK DESIGN}

\begin{figure}
	\vspace{-3mm}
	\begin{center}
		\includegraphics [width=0.5\textwidth]{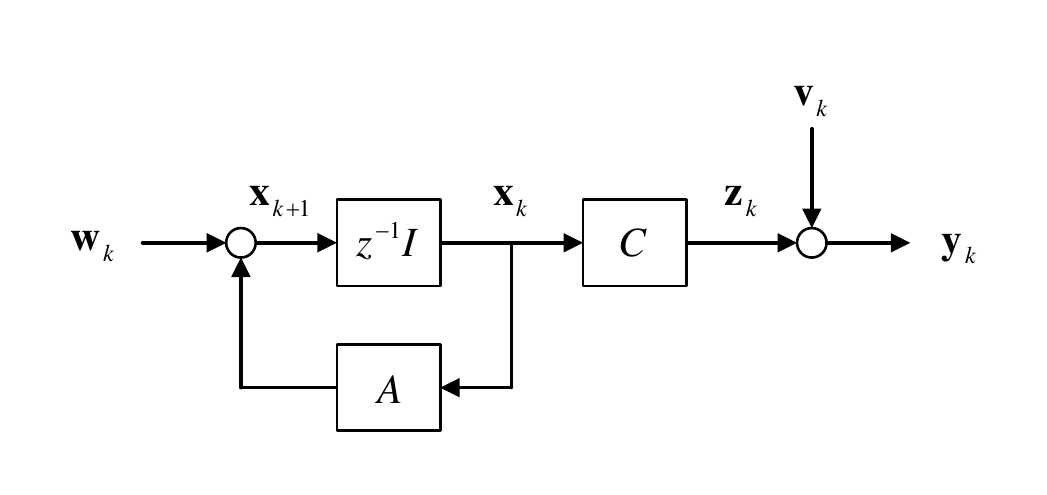}
		\vspace{-9mm}
		\caption{A Dynamical System.}
		\label{system1}
	\end{center}
	\vspace{-6mm}
\end{figure}

In this section, we examine the fundamental limits of obfuscation as well as the optimal privacy mask design for linear Gaussian dynamical systems.
Specifically, consider the dynamical system depicted in Fig.~\ref{system1}  with state-space model given by
\begin{flalign}
\left\{ \begin{array}{rcl}
\mathbf{x}_{k+1} & = & A \mathbf{x}_{k} +\mathbf{w}_k,\\
\mathbf{y}_{k} & = & C \mathbf{x}_{k} +\mathbf{v}_k,
\end{array} \right. \nonumber
\end{flalign}
where $\mathbf{x}_{k} \in \mathbb{R}^{m}$ is the system state, $\mathbf{y}_{k} \in \mathbb{R}$ is the system output, $\mathbf{w}_{k} \in \mathbb{R}^{m}$ is the process noise, and $\mathbf{v}_{k} \in \mathbb{R}$ is the measurement noise.
The system matrices are $ {A} \in \mathbb{R}^{m \times m}$ and $ {C} \in \mathbb{R}^{1 \times m}$; in this paper, we assume that $A$ is stable.
Suppose that $\left\{ \mathbf{w}_{k} \right\}$ and $\left\{ \mathbf{v}_{k} \right\}$ are stationary white Gaussian with covariance matrix $W$ and variance $\sigma_{\mathbf{v}}^2$, respectively. Furthermore,  $\left\{ \mathbf{w}_{k} \right\}$, $\left\{ \mathbf{v}_{k} \right\}$, and $\mathbf{x}_{0}$ are assumed to be mutually independent. It can be verified that the power spectrum of 
$\left\{ \mathbf{z}_{k} \right\}$  
is given by \cite{Pap:02}
\begin{flalign} 
S_{\mathbf{z}} \left( \omega \right)
=	C \left( \mathrm{e}^{\mathrm{j} \omega} I - A \right)^{-1} W \left( \mathrm{e}^{- \mathrm{j} \omega} I - A \right)^{-\mathrm{T}} C^{\mathrm{T}}.
\end{flalign} 

Consider then the scenario that a privacy mask $\left\{ \mathbf{n}_{k} \right\}, \mathbf{n}_{k} \in \mathbb{R}$, is to be added to the output of the system $\left\{ \mathbf{y}_{k} \right\}$ to protect the privacy of the system state $\left\{ \mathbf{x}_{k} \right\}$, resulting in a masked output $\left\{ \widehat{\mathbf{y}}_{k} \right\}$; see the depiction in Fig.~\ref{system2}. Suppose that the privacy mask $\left\{ \mathbf{n}_{k} \right\}$ is independent of $\left\{ \mathbf{w}_{k} \right\}$, $\left\{ \mathbf{v}_{k} \right\}$, and $\mathbf{x}_{0}$; consequently, $\left\{ \mathbf{n}_{k} \right\}$ is independent of $\left\{ \mathbf{x}_{k} \right\}$ and $\left\{ \mathbf{z}_{k} \right\}$ as well. State alternatively, $\left\{ \mathbf{x}_{k} \right\}$ may be viewed as the private information, $\left\{ \mathbf{y}_{k} \right\}$ may be viewed as the useful information, and $\left\{ \widehat{\mathbf{y}}_{k} \right\}$ may be viewed as the information to be disclosed to the public. The following questions then naturally arise: What is the fundamental tradeoff between the state privacy leakage and the output distortion led to by the privacy mask? How to design the privacy mask in an optimal way?

The following theorem, as the main result of this paper, answers the questions raised above.

\begin{theorem} \label{privacy1}
	Consider the dynamical system with privacy masks depicted in Fig.~\ref{system2}. Suppose that the properties of $\left\{ \mathbf{n}_{k} \right\}$ can be designed subject to an output distortion constraint 
	\begin{flalign}
	\mathbb{E} \left[ \left( \mathbf{y}_k - \widehat{\mathbf{y}}_{k} \right)^2 \right] \leq D.
	\end{flalign} 
	Then, in order to minimize the information leakage rate (from the state to the masked output)
	\begin{flalign}
	I_{\infty} \left( \mathbf{x} ; \widehat{\mathbf{y}} \right),
	\end{flalign}
	the noise
	$\left\{ \mathbf{n}_{k} \right\}$ should be chosen as a stationary colored Gaussian process. In addition, the power spectrum of $\left\{ \mathbf{n}_{k} \right\}$ should be chosen as
	\begin{flalign} \label{spectrum1}
	N \left( \omega \right) = \left\{ \frac{\eta}{2 \left[ 1 + \sqrt{1 + \frac{\eta}{S_{\mathbf{z}} \left( \omega \right)} } \right]} - \sigma_{\mathbf{v}}^2 \right\}^{+},
	\end{flalign} 
	where $\eta \geq 0$ satisfies
	\begin{flalign}
	&\frac{1}{2 \pi} \int_{0}^{2 \pi} N \left( \omega \right) \mathrm{d} \omega \nonumber \\
	&~~~~ = \frac{1}{2 \pi} \int_{0}^{2 \pi} \left\{ \frac{\eta}{2 \left[ 1 + \sqrt{1 + \frac{\eta}{S_{\mathbf{z}} \left( \omega \right)} } \right]} - \sigma_{\mathbf{v}}^2 \right\}^{+} \mathrm{d} \omega
	= D.
	\end{flalign} 
	Herein, 
	\begin{flalign}
	\left\{ x \right\}^{+}
	= \begin{cases}
	x, & \text{if}~x > 0;\\
	0, & \text{if}~x\leq 0.
	\end{cases}  \nonumber
	\end{flalign}
	Correspondingly, the minimum information leakage rate is given by
	\begin{flalign} \label{leakage1}
	&\inf_{\mathbb{E} \left[ \left( \mathbf{y}_k - \widehat{\mathbf{y}}_{k} \right)^2 \right] \leq D} I_{\infty} \left( \mathbf{x} ; \widehat{\mathbf{y}} \right) \nonumber \\
	&~~~~ = \frac{1}{2 \pi} \int_{0}^{2 \pi} \log \sqrt{ 1 + \frac{S_{\mathbf{z}} \left( \omega \right)}{N \left( \omega \right) + \sigma_{\mathbf{v}}^2} } \mathrm{d} \omega.
	\end{flalign} 
\end{theorem}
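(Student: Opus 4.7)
The plan is to (i) peel off the state to reduce the problem to an additive-noise channel with a scalar Gaussian signal, (ii) show that the optimal mask is Gaussian by a KL / data-processing argument, and (iii) solve the resulting convex spectral-allocation problem by KKT. Concretely, set $\mathbf{z}_k := C\mathbf{x}_k$, a zero-mean stationary scalar Gaussian process with power spectrum $S_\mathbf{z}(\omega)$ as above, so that $\widehat{\mathbf{y}}_k = \mathbf{z}_k+\mathbf{v}_k+\mathbf{n}_k$ and the distortion bound becomes $\mathbb{E}[\mathbf{n}_k^2]\le D$ (since $\mathbf{y}_k-\widehat{\mathbf{y}}_k=-\mathbf{n}_k$). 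Because $\widehat{\mathbf{y}}$ depends on $\mathbf{x}$ only through $\mathbf{z}$, and $\mathbf{z}$ is itself a deterministic function of $\mathbf{x}$, two applications of the data-processing inequality give $I_\infty(\mathbf{x};\widehat{\mathbf{y}}) = I_\infty(\mathbf{z};\widehat{\mathbf{y}})$, so it suffices to minimize $I_\infty(\mathbf{z};\mathbf{z}+\mathbf{v}+\mathbf{n})$ over masks $\{\mathbf{n}_k\}$ that are independent of $(\mathbf{x}_0,\mathbf{w},\mathbf{v})$ and satisfy the power budget.

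\textbf{Reduction to Gaussian masks.} I would next show that the optimal $\mathbf{n}$ is a stationary Gaussian process. For any horizon $k$, let $\mathbf{n}^G_{0:k}$ be jointly Gaussian with the same covariance matrix as $\mathbf{n}_{0:k}$ and independent of $(\mathbf{x}_0,\mathbf{w},\mathbf{v})$, and set $\mathbf{u} := \mathbf{v}+\mathbf{n}$, $\mathbf{u}^G := \mathbf{v}+\mathbf{n}^G$, $\widehat{\mathbf{y}}^G := \mathbf{z}+\mathbf{u}^G$. Applying the covariance-matched entropy identity $h(\mathbf{q}) = h(\mathbf{q}^G) - D(p_\mathbf{q}\|p_{\mathbf{q}^G})$ to the finite blocks $\mathbf{u}_{0:k}$ and $\widehat{\mathbf{y}}_{0:k}$ gives
\begin{flalign*}
&I(\mathbf{z}_{0:k};\widehat{\mathbf{y}}_{0:k}) - I(\mathbf{z}_{0:k};\widehat{\mathbf{y}}^G_{0:k}) \\
&\quad = D(p_{\mathbf{u}_{0:k}}\|p_{\mathbf{u}^G_{0:k}}) - D(p_{\widehat{\mathbf{y}}_{0:k}}\|p_{\widehat{\mathbf{y}}^G_{0:k}}) \ge 0,
\end{flalign*}
the inequality being the KL data-processing inequality applied to the Markov kernel $\mathbf{u}\mapsto \mathbf{u}+\mathbf{z}_{0:k}$ (convolution with the law of $\mathbf{z}_{0:k}$). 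Dividing by $k+1$ and letting $k\to\infty$ transfers the inequality to mutual-information rates, so it is enough to optimize over stationary Gaussian $\mathbf{n}$ with spectrum $N(\omega)\ge 0$ satisfying $\tfrac{1}{2\pi}\int_0^{2\pi}N(\omega)\,\mathrm{d}\omega\le D$.

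\textbf{Spectral formula and KKT water-filling.} For such $\mathbf{n}$, the effective noise $\mathbf{u}=\mathbf{v}+\mathbf{n}$ is stationary Gaussian with spectrum $N(\omega)+\sigma_\mathbf{v}^2$ and is independent of $\mathbf{z}$, so $S_{\widehat{\mathbf{y}}}=S_\mathbf{z}+S_\mathbf{u}$ and $S_{\mathbf{z},\widehat{\mathbf{y}}}=S_\mathbf{z}$; the classical spectral formula for the mutual-information rate of jointly stationary scalar Gaussian processes then reproduces precisely the claimed leakage-rate expression. Minimizing this convex, pointwise-decreasing-in-$N(\omega)$ functional over $N(\cdot)\ge 0$ subject to the linear power budget is a convex program: I would form the Lagrangian with multiplier for the budget, set its pointwise derivative with respect to $N(\omega)$ to zero on the active set $\{N(\omega)>0\}$, obtain a quadratic in $y(\omega) := N(\omega)+\sigma_\mathbf{v}^2$, and take its positive root. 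After rationalization, that root becomes $\eta/\{2[1+\sqrt{1+\eta/S_\mathbf{z}(\omega)}]\}$ for a Lagrange multiplier $\eta\ge 0$; complementary slackness clips at zero, producing the $\{\cdot\}^+$ form of the claimed spectrum, and $\eta$ is pinned down by the (binding) distortion equality. The main obstacle is the Gaussianity reduction in the second step — in particular the limit passage from finite-horizon mutual information to the rate — while the spectral identity and the water-filling KKT computation are then routine.
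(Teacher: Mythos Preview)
Your proposal is correct and follows the same overall strategy as the paper: reduce to the additive ``virtual channel'' $\widehat{\mathbf{y}}=\mathbf{z}+\mathbf{v}+\mathbf{n}$, establish that the optimal mask is Gaussian, then solve the resulting spectral allocation by KKT. The execution differs in two places. First, for Gaussian optimality the paper simply cites the worst-case-noise result (Yeung, Section~11.9), whereas you supply a self-contained proof via the covariance-matched entropy identity $h(\mathbf{q})=h(\mathbf{q}^G)-D(p_{\mathbf{q}}\Vert p_{\mathbf{q}^G})$ together with the KL data-processing inequality under convolution by $\mathbf{z}$; this is essentially the Diggavi--Cover argument and is a cleaner, standalone justification. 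Second, the paper works in the time domain: it eigendecomposes the finite-block covariance $\Sigma_{\mathbf{z}_{0,\ldots,k}}$, solves the KKT conditions in the eigenbasis, and only then passes to the spectral limit via Toeplitz/Szeg\H{o} asymptotics; you instead invoke the spectral mutual-information-rate formula for jointly stationary Gaussian processes directly and carry out the KKT in the frequency domain. Your route is more streamlined but hides the finite-horizon-to-rate passage inside the spectral formula (which is precisely the step you flag as the main obstacle), while the paper's block-then-limit argument makes that passage explicit and simultaneously justifies why the optimal covariance becomes asymptotically Toeplitz, i.e., why stationarity of the optimal mask is not an extra assumption.
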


\begin{figure}
	\vspace{-3mm}
	\begin{center}
		\includegraphics [width=0.5\textwidth]{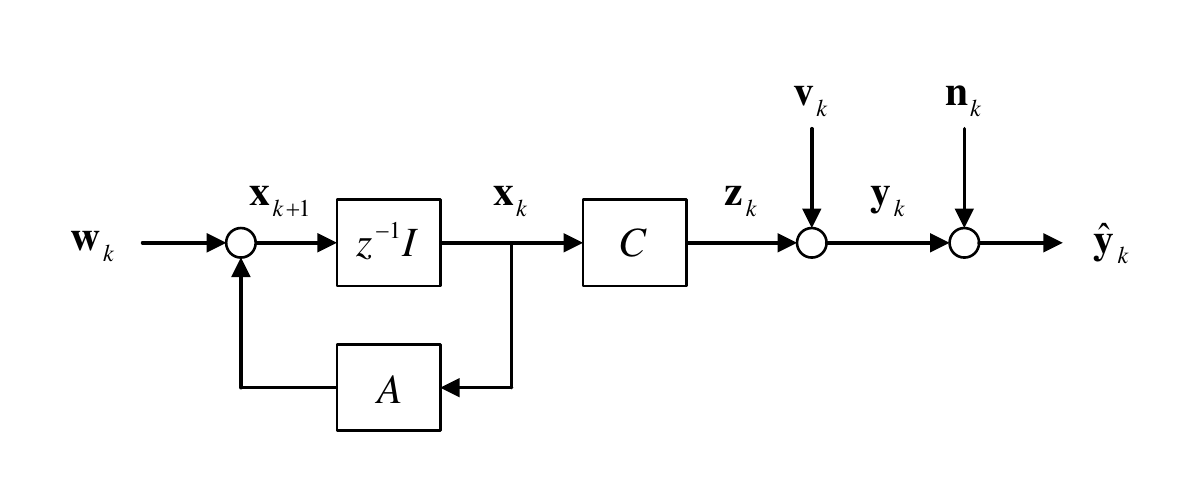}
		\vspace{-9mm}
		\caption{A Dynamical System with Privacy Mask.}
		\label{system2}
	\end{center}
	\vspace{-6mm}
\end{figure}

\vspace{3mm}

\begin{proof}
	Note first that the system in Fig.~\ref{system2} may be viewed as a ``virtual channel" (see also Section~\ref{virtual}) modeled as
	\begin{flalign}
	\widehat{\mathbf{y}}_{k} 
	= \mathbf{y}_{k} + \mathbf{n}_{k}
	= \mathbf{z}_{k} +
	\mathbf{v}_{k} + \mathbf{n}_{k}. \nonumber
	\end{flalign}
	Note then that the distortion constraint $
	\mathbb{E} \left[ \left( \mathbf{y}_k - \widehat{\mathbf{y}}_{k} \right)^2 \right] \leq D $
	is equivalent to being with a power constraint $\mathbb{E} \left[ \mathbf{n}_k^2 \right] \leq D$, since $\widehat{\mathbf{y}}_{k} = \mathbf{y}_{k} + \mathbf{n}_k $ and thus $\left( \mathbf{y}_k - \widehat{\mathbf{y}}_{k} \right)^2 = \mathbf{n}_k^2$.
	
	We start by considering the case of a finite number of parallel (dependent) channels with
	\begin{flalign}
	\widehat{\mathbf{y}} = \mathbf{z} + \mathbf{v} + \mathbf{n}, \nonumber
	\end{flalign}
	where $ \mathbf{z},\mathbf{v},\mathbf{n}, \widehat{\mathbf{y}} \in  \mathbb{R}^m$, while $ \mathbf{z} $, $ \mathbf{v} $, and $ \mathbf{n} $ are mutually independent. In addition, suppose that $\mathbf{z}$ and $\mathbf{v}$ are Gaussian with covariance matrices $\Sigma_{\mathbf{z}}$ and $\Sigma_{\mathbf{n}}$, respectively, and the noise power constraint is given by 
	\begin{flalign} 
	\mathrm{tr} \left(\Sigma_{\mathbf{n} }  \right)
	= \mathbb{E} \left[ \sum_{i=1}^{m}
	\mathbf{n}^{2}\left( i \right) \right] \leq D. \nonumber
	\end{flalign}
	where $\mathbf{n} \left( i \right)$ denotes the $i$-th element of $\mathbf{n}$. Note in particular that herein the elements $\mathbf{v}$, denoted as $\mathbf{v} \left( i \right), i = 1, \ldots, m$, are assumed to be i.i.d. with variance $\sigma_{\mathbf{v}}^2$ and thus $\Sigma_{\mathbf{v}} = \sigma_{\mathbf{v}}^2 I$.
	(Note also that what was described above does not reduce to the case of the channel model considered in \cite{CLarxiv}, due to the presence of $\mathbf{v}$; particularly, $\mathbf{v}$ cannot be merged into $\mathbf{z}$ nor $\mathbf{n}$. Instead, what is considered herein can be viewed as a generalized channel model of that in \cite{CLarxiv}; see Section~\ref{virtual} for further discussions on this.)
	In addition, since $\mathbf{z}$, $ \mathbf{v}$, and $ \mathbf{n}$ are mutually independent, while noting that $\mathbf{z}$ is a function of $\mathbf{x}$, we have
	\begin{flalign}
	I\left(\mathbf{x} ;\widehat{\mathbf{y}} \right)
	&=h\left(\widehat{\mathbf{y}} \right)-h\left(\widehat{\mathbf{y}} \middle| \mathbf{x} \right)
	=h\left(\widehat{\mathbf{y}} \right)-h\left(\mathbf{v} + \mathbf{n}  \middle| \mathbf{x} \right) \nonumber \\ & =h\left(\widehat{\mathbf{y}} \right)-h\left(\mathbf{z} + \mathbf{v} + \mathbf{n}  \middle| \mathbf{x} \right) =h\left(\widehat{\mathbf{y}} \right)-h\left(\mathbf{v} + \mathbf{n} \right), \nonumber
	\end{flalign}
	while 
	\begin{flalign}
	\Sigma_{\mathbf{v} +\mathbf{n} }
	=\Sigma_{\mathbf{v}}
	+\Sigma_{\mathbf{n}}, \nonumber
	\end{flalign}
	and 
	\begin{flalign}
	\Sigma_{\widehat{\mathbf{y}}}
	=\Sigma_{\mathbf{z} +\mathbf{v} +\mathbf{n} }
	=\Sigma_{\mathbf{z}}
	+\Sigma_{\mathbf{v}}
	+\Sigma_{\mathbf{n}}. \nonumber
	\end{flalign}
	Meanwhile, it may be verified that the minimum of $I \left(\mathbf{x}; \widehat{\mathbf{y}} \right)$ is achieved if $\mathbf{v} + \mathbf{n}$ is Gaussian (see, e.g., Section~11.9 of \cite{yeung2008information}), that is, if $\mathbf{n}$ is Gaussian (since $\mathbf{v}$ is assumed to be Gaussian). Particularly when $\mathbf{n}$ is Gaussian, $\widehat{\mathbf{y}}$ is also Gaussian, and it holds that
	\begin{flalign}
	I \left(\mathbf{x}; \widehat{\mathbf{y}} \right)
	&=h \left(\widehat{\mathbf{y}} \right)-h\left(\mathbf{v} + \mathbf{n} \right) \nonumber \\
	&=\frac{1}{2}\log \left[ \left(2\pi \mathrm{e} \right)^m  \det  \Sigma_{\widehat{\mathbf{y}}} \right]-\frac{1}{2}\log \left[ \left(2\pi \mathrm{e} \right)^m  \det  \Sigma_{\mathbf{v} + \mathbf{n}}\right] \nonumber \\
	&=\frac{1}{2}\log \frac{ \det  \Sigma_{\widehat{\mathbf{y}}}}{ \det  \Sigma_{\mathbf{v} + \mathbf{n}}}
	=\frac{1}{2}\log \frac{ \det \left( \Sigma_{\mathbf{z}}+\Sigma_{\mathbf{v}} + \Sigma_{\mathbf{n}}\right)}{ \det  \left( \Sigma_{\mathbf{v}} + \Sigma_{\mathbf{n}}\right) } \nonumber \\
	&=\frac{1}{2}\log \frac{ \det\left( \Sigma_{\mathbf{v}} + \Sigma_{\mathbf{n}} +U_{\mathbf{z}}\Lambda_{\mathbf{z}}U^{\mathrm{T}}_{\mathbf{z}}
		\right)}{ \det  \left( \Sigma_{\mathbf{v}} + \Sigma_{\mathbf{n}} 
		\right)} \nonumber \\ &=\frac{1}{2}\log \frac{ \det\left( \overline{\Sigma}_{\mathbf{v}} + \overline{\Sigma}_{\mathbf{n}} +\Lambda_{\mathbf{z}}
		\right)}{\det \left( \overline{\Sigma}_{\mathbf{v}} + \overline{\Sigma}_{\mathbf{n}} 
		\right)}, \nonumber
	\end{flalign}
	where $U_{\mathbf{z}} \Lambda_{\mathbf{z}} U^{\mathrm{T}}_{\mathbf{z}} $ denotes the eigen-decomposition of $\Sigma_{\mathbf{z}}$ with \begin{flalign}\Lambda_{\mathbf{z}} = \mathrm{diag} \left( \lambda_{1}, \ldots, \lambda_{m} \right), \nonumber
	\end{flalign}
	while  
	\begin{flalign}
	\overline{\Sigma}_{\mathbf{v} }=U^{\mathrm{T}}_{\mathbf{z}}\Sigma_{\mathbf{v}} U_{\mathbf{z}} = U^{\mathrm{T}}_{\mathbf{z}} \sigma_{\mathbf{v}}^2  U_{\mathbf{z}} = \sigma_{\mathbf{v}}^2 U^{\mathrm{T}}_{\mathbf{z}}  U_{\mathbf{z}} = \sigma_{\mathbf{v}}^2 I, \nonumber
	\end{flalign} and $\overline{\Sigma}_{\mathbf{n} }=U^{\mathrm{T}}_{\mathbf{z}}\Sigma_{\mathbf{n}} U_{\mathbf{z}} $. Note that
	\begin{flalign}
	\mathrm{tr} \left( \overline{\Sigma}_{\mathbf{n}} \right)
	&=\mathrm{tr} \left(U^{\mathrm{T}}_{\mathbf{z}}\Sigma_{\mathbf{n} } U_{\mathbf{z}} \right)
	= \mathrm{tr} \left(U_{\mathbf{z}} U^{\mathrm{T}}_{\mathbf{z}}\Sigma_{\mathbf{n} }  \right) =\mathrm{tr} \left(\Sigma_{\mathbf{n} }  \right) \nonumber \\
	&= \mathbb{E} \left[ \sum_{i=1}^{m}\mathbf{n}^{2} \left( i \right) \right] \leq D.
	\nonumber
	\end{flalign}
	As such,
	\begin{flalign}
	\frac{1}{2}\log \frac{ \det\left( \overline{\Sigma}_{\mathbf{v}} + \overline{\Sigma}_{\mathbf{n}} +\Lambda_{\mathbf{z}}
		\right)}{\det \left( \overline{\Sigma}_{\mathbf{v}} + \overline{\Sigma}_{\mathbf{n}} 
		\right)}
	=\frac{1}{2}\log \frac{ \det\left( \overline{\Sigma}_{\mathbf{v}} + \sigma_{\mathbf{v}}^2 I +\Lambda_{\mathbf{z}}
		\right)}{\det \left( \overline{\Sigma}_{\mathbf{v}} + \sigma_{\mathbf{v}}^2 I 
		\right)}
	. \nonumber
	\end{flalign}
	On the other hand, it may be verified (see Lemma~3.2 of \cite{fang2017towards}) that
	\begin{flalign}
	\frac{1}{2}\log \frac{ \det\left( \overline{\Sigma}_{\mathbf{v}} + \sigma_{\mathbf{v}}^2 I +\Lambda_{\mathbf{z}}
		\right)}{\det \left( \overline{\Sigma}_{\mathbf{v}} + \sigma_{\mathbf{v}}^2 I 
		\right)}
	\geq \frac{1}{2} \log  \prod_{i=1}^m \left[\frac{ \overline{\sigma}_{\mathbf{n} \left( i \right)}^2 + \sigma_{\mathbf{v}}^2 +\lambda_{i}}{\overline{\sigma}_{\mathbf{n} \left( i \right)}^2 + \sigma_{\mathbf{v}}^2} \right], \nonumber
	\end{flalign}
	where $\overline{\sigma}_{\mathbf{n} \left( i \right)}^2, i=1,\ldots,m$, are the diagonal terms of $\overline{\Sigma}_{\mathbf{n} }$, and the equality holds if $\overline{\Sigma}_{\mathbf{n} }$ is diagonal. Meanwhile, when $\overline{\Sigma}_{\mathbf{n} }$ is diagonal, let us denote
	\begin{flalign}
	\overline{\Sigma}_{\mathbf{n} }=\mathrm{diag}\left(\overline{\sigma}_{\mathbf{n} \left(1\right)}^2,\ldots, \overline{\sigma}_{\mathbf{n} \left(m\right)}^2 \right)
	=\mathrm{diag}\left(N_{1},\ldots,N_{m}\right) \nonumber
	\end{flalign}
	for simplicity.
	Then, the problem of
	\begin{flalign}
		\inf_{p_{\mathbf{n}}}
		I \left(\mathbf{x}; \widehat{\mathbf{y}} \right) \nonumber
	\end{flalign}
	reduces to that of choosing $N_1,\ldots, N_m$ to minimize 
	\begin{flalign}
	\frac{1}{2} \log  \prod_{i=1}^m \left( \frac{ N_{i} + \sigma_{\mathbf{v}}^2 +\lambda_{i} }{N_{i} + \sigma_{\mathbf{v}}^2} \right)
	= \sum_{i=1}^{m} \frac{1}{2}\log \left( 1+\frac{\lambda_{i}}{N_{i} + \sigma_{\mathbf{v}}^2} \right) \nonumber
	\end{flalign}
	subject to the constraint that 
	\begin{flalign}
	\sum_{i=1}^{m} N_{i} 
	= \mathrm{tr} \left( \overline{\Sigma}_{\mathbf{z}} \right)  
	= N. \nonumber
	\end{flalign}
	Define the Lagrange function by
	\begin{flalign}
	\sum_{i=1}^{m} \frac{1}{2}\log \left( 1+\frac{\lambda_{i}}{N_{i} + \sigma_{\mathbf{v}}^2} \right)+\zeta \left(\sum_{i=1}^{m} N_{i}-N\right), \nonumber
	\end{flalign}
	and differentiate it with respect to $N_{i}$, then we have
	\begin{flalign}
	\frac{\log \mathrm{e}}{2}\left( \frac{1}{N_{i}+ \sigma_{\mathbf{v}}^2+ \lambda_{i}}-\frac{1}{N_{i}+ \sigma_{\mathbf{v}}^2}\right) +\zeta
	=0, \nonumber
	\end{flalign}
	or equivalently,
	\begin{flalign}
	N_{i}
	=\frac{\sqrt{\lambda_{i}^2+\zeta \lambda_{i}}-\lambda_{i}}{2} - \sigma_{\mathbf{v}}^2
	=\frac{\eta}{2\left(1 + \sqrt{1+\frac{\eta }{\lambda_{i}}}
		\right)} - \sigma_{\mathbf{v}}^2, \nonumber
	\end{flalign}
	where $\eta= 2 \log \mathrm{e}  / \zeta \geq 0$.
	However, since $N_i \geq 0,~i=1, \ldots, m$, it may not always be possible
	to find a solution of this form; in other words, the term 
	\begin{flalign}
	\frac{\eta}{2\left(1 + \sqrt{1+\frac{\eta }{\lambda_{i}}}
		\right)} - \sigma_{\mathbf{v}}^2 \nonumber
	\end{flalign}
	may be negative for some $i$,  rendering this solution infeasible. Instead, we can use the Kuhn--Tucker
	conditions to verify that the optimal solution is in fact given by
	\begin{flalign}
	N_{i}
	= \left[ \frac{\eta}{2\left(1 + \sqrt{1+\frac{\eta }{\lambda_{i}}}
		\right)} - \sigma_{\mathbf{v}}^2 \right]^{+}, \nonumber
	\end{flalign}
	where
	\begin{flalign}
	\left[ x \right]^{+}
	= \left\{ \begin{array}{rcl}
	x & \text{if}~x > 0;\\
	0 & \text{if}~x\leq 0,
	\end{array} \right. \nonumber
	\end{flalign} 	
	and $\eta  \geq 0$ satisfies 
	\begin{flalign}
	\sum_{i=1}^{m} N_{i} 
	= \sum_{i=1}^{m} \left[ \frac{\eta}{2\left(1 + \sqrt{1+\frac{\eta }{\lambda_{i}}}
		\right)} - \sigma_{\mathbf{v}}^2 \right]^{+}
	= N. \nonumber
	\end{flalign}
	
	Consider now a scalar (dynamic) channel
	\begin{flalign}
	\widehat{\mathbf{y}}_{k} = \mathbf{z}_{k} + \mathbf{v}_{k} + \mathbf{n}_{k}, \nonumber
	\end{flalign}
	where $ \widehat{\mathbf{y}}_{k}, \mathbf{z}_{k}, \mathbf{v}_{k}, \mathbf{n}_{k} \in  \mathbb{R}$, while $ \left\{ \mathbf{z}_{k} \right\}$, $ \left\{ \mathbf{v}_{k} \right\}$, and $ \left\{ \mathbf{n}_{k} \right\}$ are mutually independent. In addition, $\left\{ \mathbf{z}_{k} \right\}$ is stationary colored Gaussian with power spectrum $S_{\mathbf{z}} \left( \omega \right)$, $\left\{ \mathbf{v}_{k} \right\}$ is stationary white Gaussian with variance $\sigma_{\mathbf{v}}^2$, and the noise power constraint is given by $\mathbb{E} \left[
	\mathbf{n}^{2}_{k} \right] \leq D$.
	%
	We may then consider a block of consecutive uses from time $0$ to $k$ of this channel 
	as $k+1$ channels in parallel \cite{Cov:06}. Particularly, let the eigen-decomposition of $\Sigma_{\mathbf{z}_{0,\ldots,k}}$ be given by
	\begin{flalign}
	\Sigma_{\mathbf{z}_{0,\ldots,k}}=U_{\mathbf{z}_{0,\ldots,k}}\Lambda_{\mathbf{z}_{0,\ldots,k}}U^{\mathrm{T}}_{\mathbf{z}_{0,\ldots,k}}, \nonumber
	\end{flalign} 
	where
	\begin{flalign}
	\Lambda_{\mathbf{z}_{0,\ldots,k}}
	=\mathrm{diag} \left( \lambda_{0},\ldots,\lambda_{k} \right). \nonumber
	\end{flalign}
	Then, we have
	\begin{flalign}
	&\min_{p_{\mathbf{n}_{0,\ldots,k}}:~\sum_{i=0}^{k} \mathbb{E}
		\left[ \mathbf{n}_{i}^{2} \right]
		\leq \left(k+1\right)D} \frac{I \left(\mathbf{x}_{0,\ldots,k}; \widehat{\mathbf{y}}_{0,\ldots,k} \right)}{k+1} \nonumber \\
	&~~~~ =\frac{1}{k+1} \sum_{i=0}^{k} \frac{1}{2}\log \left( 1+\frac{\lambda_{i}}{N_{i} + \sigma_{\mathbf{v}}^2 }\right), \nonumber
	\end{flalign}
	where
	\begin{flalign}
	N_{i}   = \left[ \frac{\eta}{2\left(\sqrt{1+\frac{\eta }{\lambda_{i}   }}+1
		\right)} - \sigma_{\mathbf{v}}^2 \right]^{+},~i=0,\ldots,k. \nonumber
	\end{flalign}
	Herein, $\eta \geq 0$ satisfies
	\begin{flalign}
	\sum_{i=0}^{k} N_{i}  
	= \sum_{i=0}^{k} \left[ \frac{\eta}{2\left(\sqrt{1+\frac{\eta }{\lambda_{i}   }}+1
		\right)} - \sigma_{\mathbf{v}}^2 \right]^{+}
	= \left( k+1 \right) N, \nonumber
	\end{flalign}
	or equivalently,
	\begin{flalign}
	\frac{1}{k+1} \sum_{i=0}^{k} N_{i}  
	= \frac{1}{k+1} \left[ \frac{\eta}{2\left(\sqrt{1+\frac{\eta }{\lambda_{i}   }}+1
		\right)} - \sigma_{\mathbf{v}}^2 \right]^{+}
	= N. \nonumber
	\end{flalign}
	In addition, as $k \to \infty$, the processes $ \left\{ \mathbf{z}_{k} \right\}$, $ \left\{ \mathbf{v}_{k} \right\}$,  $ \left\{ \mathbf{n}_{k} \right\}$, and $ \left\{ \widehat{\mathbf{y}}_{k} \right\}$ are stationary, and
	\begin{flalign}
	&\lim_{k\to \infty} \min_{p_{\mathbf{n}_{0,\ldots,k}}:~\sum_{i=0}^{k}
		\mathbb{E}
		\left[ \mathbf{n}_{i}^{2} \right] \leq \left(k+1\right)D} \frac{I \left(\mathbf{x}_{0,\ldots,k}; \widehat{\mathbf{y}}_{0,\ldots,k} \right)}{k+1} \nonumber \\
	&~~~~ =\inf_{p_{\mathbf{n}}}  \lim_{k\to \infty}\frac{I \left(\mathbf{x}_{0,\ldots,k}; \widehat{\mathbf{y}}_{0,\ldots,k} \right)}{k+1}\nonumber \\
	&~~~~=\inf_{p_{\mathbf{n}}}  \limsup_{k\to \infty}\frac{I \left(\mathbf{x}_{0,\ldots,k}; \widehat{\mathbf{y}}_{0,\ldots,k} \right)}{k+1}
	= \inf_{p_{\mathbf{n}}}I_{\infty} \left(\mathbf{x}; \widehat{\mathbf{y}} \right) 
	. \nonumber
	\end{flalign}
	On the other hand, since the processes are stationary, the covariance
	matrices are Toeplitz \cite{grenander1958toeplitz}, and their eigenvalues approach their limits as $k
	\to \infty$. Moreover, the densities of eigenvalues on the real line
	tend to the power spectra of the processes \cite{gutierrez2008asymptotically}. Accordingly,
	\begin{flalign}
	&\inf_{p_{\mathbf{n}}}I_{\infty} \left(\mathbf{x}; \widehat{\mathbf{y}} \right) \nonumber \\
	&~~~~ =\lim_{k\to \infty} \min_{p_{\mathbf{n}_{0,\ldots,k}}:~\sum_{i=0}^{k}
		\mathbb{E}
		\left[ \mathbf{n}_{i}^{2} \right] \leq \left(k+1\right) D} \frac{I \left(\mathbf{x}_{0,\ldots,k}; \widehat{\mathbf{y}}_{0,\ldots,k} \right)}{k+1}  \nonumber \\
	&~~~~ = \lim_{k\to \infty} \frac{1}{k+1} \sum_{i=0}^{k}\frac{1}{2}  \log \left( 1+\frac{\lambda_{i}}{N_{i} + \sigma_{\mathbf{v}}^2} \right)
	\nonumber \\
	&~~~~ = \frac{1}{2\pi} \int_{-\pi}^{\pi} \frac{1}{2}  \log \left[ 1+\frac{S_{\mathbf{z}} \left(\omega \right)}{N \left(\omega \right) + \sigma_{\mathbf{v}}^2} \right] \mathrm{d} \omega \nonumber \\
	&~~~~ = \frac{1}{2\pi} \int_{-\pi}^{\pi} \log \sqrt{ 1+\frac{S_{\mathbf{z}} \left(\omega \right)}{N \left(\omega \right) + \sigma_{\mathbf{v}}^2} } \mathrm{d} \omega,
	\nonumber
	\end{flalign}
	where
	\begin{flalign}
	N \left(\omega \right) 
	= \left\{ \frac{\eta}{2\left[\sqrt{1+\frac{\eta }{S_{\mathbf{z}} \left( \omega \right)}}+1 \right]} - \sigma_{\mathbf{v}}^2 \right\}^{+}, \nonumber
	\end{flalign}
	and $\eta \geq 0 $ satisfies
	\begin{flalign}
	\lim_{k\to \infty} \frac{1}{k+1}\sum_{i=0}^{k} N_{i}
	=\frac{1}{2\pi} \int_{-\pi}^{\pi} N \left(\omega \right) \mathrm{d}  \omega \nonumber 
	=N. \nonumber
	\end{flalign}
	This concludes the proof.
\end{proof}

In general, it can be verified that the more distortion allowed, the less privacy leakage will occur. This privacy-distortion tradeoff is analytically captured in Theorem~\ref{privacy1}. In the extreme case of when $\left\{ \mathbf{z}_{k} \right\}$ is  stationary white Gaussian, that is, when $A = 0$, we have
\begin{flalign} 
\sigma_{\mathbf{z}}^2
= S_{\mathbf{z}} \left( \omega \right)
=	C  W  C^{\mathrm{T}},
\end{flalign} 
and the privacy-distortion tradeoff in Theorem~\ref{privacy1} reduces to 
\begin{flalign} 
\inf_{\mathbb{E} \left[ \left( \mathbf{y}_k - \widehat{\mathbf{y}}_{k} \right)^2 \right] \leq D} I_{\infty} \left( \mathbf{x} ; \widehat{\mathbf{y}} \right) 
= \frac{1}{2} \log \left( 1 + \frac{C  W  C^{\mathrm{T}}}{D + \sigma_{\mathbf{v}}^2} \right).
\end{flalign} 

Note also that in general $N \left( \omega \right)$ becomes larger as $S_{\mathbf{z}} \left( \omega \right)$ becomes larger in \eqref{spectrum1}; particularly, it may be verified that when $S_{\mathbf{z}} \left( \omega \right)$ is below the threshold 
\begin{flalign}
	\frac{\eta}{\left( \frac{\eta}{2 \sigma_{\mathbf{v}}^2} - 1 \right)^2 - 1},
\end{flalign}
then $N \left( \omega \right) = 0$; while when $S_{\mathbf{z}} \left( \omega \right)$ is above the aforementioned threshold, $N \left( \omega \right)$ strictly increases with $S_{\mathbf{z}} \left( \omega \right)$. This means that more power shall be delivered to frequencies at which the power spectra of $\left\{ \mathbf{z}_{k} \right\}$ are larger (above a threshold). In a broad sense, this solution may be viewed as a ``thresholded obfuscating" power allocation policy (cf. discussions  in \cite{CLarxiv} on ``obfuscating" power allocation solutions, as well as the relations with ``water-filling" and ``reverse water-filling" policies).



\subsection{Perspective of ``Virtual Channel"} \label{virtual}

In fact, the system in Fig.~\ref{system2} may be viewed as a ``virtual channel" modeled as
\begin{flalign}
\widehat{\mathbf{y}}_{k} 
= \mathbf{y}_{k} + \mathbf{n}_{k}
= \mathbf{z}_{k} +
\mathbf{v}_{k} + \mathbf{n}_{k},
\end{flalign}
where $\left\{ \mathbf{z}_{k} \right\}$ (or equivalently, $\left\{ \mathbf{x}_{k} \right\}$; see \eqref{mutual}) is the channel input, $\left\{ \widehat{\mathbf{y}}_{k} \right\}$ is the channel output, $\left\{ \mathbf{v}_{k} \right\}$ is the channel noise that is pre-given and cannot be designed, and $\left\{ \mathbf{n}_{k} \right\}$ is the channel noise that can be designed (subject to a constraint). This channel model may be viewed as a generalized version of that considered in \cite{CLarxiv}; particularly, the leakage of this channel is measured by
\begin{flalign} \label{mutual}
I_{\infty} \left( \mathbf{z}; \widehat{\mathbf{y}} \right) 
= I_{\infty} \left( \mathbf{x}; \widehat{\mathbf{y}} \right),
\end{flalign}
subject to a power constraint
\begin{flalign}
\mathbb{E} \left( \mathbf{n}_k^2 \right) = \mathbb{E} \left[ \left( \mathbf{y}_k - \widehat{\mathbf{y}}_{k} \right)^2 \right] \leq D.
\end{flalign} 
Note that herein we have employed the following steps to prove \eqref{mutual}:
\begin{flalign}
&I_{\infty} \left( \mathbf{z}; \widehat{\mathbf{y}} \right) 
= h_{\infty} \left( \widehat{\mathbf{y}} \right) - h_{\infty} \left( \widehat{\mathbf{y}} | \mathbf{z} \right) \nonumber \\
&= h_{\infty} \left( \widehat{\mathbf{y}} \right) - h_{\infty} \left( \mathbf{z} + \mathbf{v} + \mathbf{n} | \mathbf{z} \right) 
= h_{\infty} \left( \widehat{\mathbf{y}} \right) - h_{\infty} \left( \mathbf{v} + \mathbf{n} | \mathbf{z} \right) \nonumber \\
&= h_{\infty} \left( \widehat{\mathbf{y}} \right) - h_{\infty} \left( \mathbf{v} + \mathbf{n} \right) 
= h_{\infty} \left( \widehat{\mathbf{y}} \right) - h_{\infty} \left( \mathbf{v} + \mathbf{n} | \mathbf{x} \right) \nonumber \\
&= h_{\infty} \left( \widehat{\mathbf{y}} \right) - h_{\infty} \left( \mathbf{z} + \mathbf{v} + \mathbf{n} | \mathbf{x} \right) 
= h_{\infty} \left( \widehat{\mathbf{y}} \right) - h_{\infty} \left( \widehat{\mathbf{y}} | \mathbf{x} \right) \nonumber \\
&= I_{\infty} \left( \mathbf{x}; \widehat{\mathbf{y}} \right).
\end{flalign}



\subsection{Connection with Conditional Entropy}

Note first that 
\begin{flalign}
I_{\infty} \left( \mathbf{x} ; \widehat{\mathbf{y}} \right)
= h_{\infty} \left( \mathbf{x} \right) - h_{\infty} \left( \mathbf{x} | \widehat{\mathbf{y}} \right),
\end{flalign}
and hence
\begin{flalign}
&h_{\infty} \left( \mathbf{x} | \widehat{\mathbf{y}} \right)
= h_{\infty} \left( \mathbf{x} \right) - I_{\infty} \left( \mathbf{x} ; \widehat{\mathbf{y}} \right) \nonumber \\ 
&~~~~ = \frac{1}{2 \pi} \int_{0}^{2 \pi} \log \sqrt{ \left( 2 \pi \mathrm{e} \right)^m \det \Phi_{\mathbf{x}} \left( \omega \right)} \mathrm{d} \omega - I_{\infty} \left( \mathbf{x} ; \widehat{\mathbf{y}} \right).
\end{flalign}
Since $\Phi_{\mathbf{x}} \left( \omega \right)$ is pre-given as
\begin{flalign}
\Phi_{\mathbf{x}} \left( \omega \right)
= \left( \mathrm{e}^{\mathrm{j} \omega} I - A \right)^{-1} W \left( \mathrm{e}^{- \mathrm{j} \omega} I - A \right)^{-\mathrm{T}},
\end{flalign} 
minimizing $I_{\infty} \left( \mathbf{x} ; \widehat{\mathbf{y}} \right)$ is in fact equivalent to maximizing $h_{\infty} \left( \mathbf{x} | \widehat{\mathbf{y}} \right)$, which is another privacy measure that is oftentimes employed in estimation problems (see, e.g., \cite{Cov:06, FangITW19} and the references therein). Particularly, it holds that
\begin{flalign}
&\sup_{\mathbb{E} \left[ \left( \mathbf{y}_k - \widehat{\mathbf{y}}_{k} \right)^2 \right] \leq D} h_{\infty} \left( \mathbf{x} | \widehat{\mathbf{y}} \right) \nonumber \\ 
&~~~~ = \frac{1}{2 \pi} \int_{0}^{2 \pi} \log \sqrt{ \left( 2 \pi \mathrm{e} \right)^m \det \Phi_{\mathbf{x}} \left( \omega \right)} \mathrm{d} \omega \nonumber \\
&~~~~~~~~ - \frac{1}{2 \pi} \int_{0}^{2 \pi} \log \sqrt{ 1 + \frac{S_{\mathbf{z}} \left( \omega \right)}{N \left( \omega \right) + \sigma_{\mathbf{v}}^2} } \mathrm{d} \omega. \nonumber \\
&~~~~= \frac{1}{2 \pi} \int_{0}^{2 \pi} \log \sqrt{ \left( 2 \pi \mathrm{e} \right)^m \frac{\left[ N \left( \omega \right) + \sigma_{\mathbf{v}}^2 \right]\det \Phi_{\mathbf{x}} \left( \omega \right) }{S_{\mathbf{z}} \left( \omega \right) + N \left( \omega \right) + \sigma_{\mathbf{v}}^2}} \mathrm{d} \omega,
\end{flalign} 
where $N \left( \omega \right)$ is given by \eqref{spectrum1}.

\subsection{Comparison with Adding I.I.D. Gaussian Masks}

What is the difference between the solution in \eqref{leakage1} and adding stationary white (i.i.d.) Gaussian privacy masks instead? It can be verified that in the i.i.d. case, the information leakage rate subject to distortion constraint 
\begin{flalign}
\mathbb{E} \left[ \left( \mathbf{y}_k - \widehat{\mathbf{y}}_{k} \right)^2 \right] \leq D
\end{flalign}
is given by
\begin{flalign}
I_{\infty} \left( \mathbf{x} ; \widehat{\mathbf{y}} \right)  = \frac{1}{2 \pi} \int_{0}^{2 \pi} \log \sqrt{ 1 + \frac{S_{\mathbf{z}} \left( \omega \right)}{D + \sigma_{\mathbf{v}}^2} } \mathrm{d} \omega.
\end{flalign} 
In comparison with \eqref{leakage1}, it may be verified that
\begin{flalign}
&\frac{1}{2 \pi} \int_{0}^{2 \pi} \log \sqrt{ 1 + \frac{S_{\mathbf{z}} \left( \omega \right)}{D + \sigma_{\mathbf{v}}^2} } \mathrm{d} \omega \nonumber \\
&~~~~ \geq \frac{1}{2 \pi} \int_{0}^{2 \pi} \log \sqrt{ 1 + \frac{S_{\mathbf{z}} \left( \omega \right)}{N \left( \omega \right) + \sigma_{\mathbf{v}}^2} } \mathrm{d} \omega,
\end{flalign} 
where equality holds if and only if $S_{\mathbf{z}} \left( \omega \right) = 0$. That is to say, when subject to the same distortion constraint, adding i.i.d. Gaussian privacy masks will always lead to more privacy leakage than adding stationary colored Gaussian privacy masks with power spectra shaped according to \eqref{spectrum1}.

\subsection{Dual Problem}

In fact, the question Theorem~\ref{privacy1} answers is: Given a certain distortion constraint, what is the minimum privacy leakage (and how to design the optimal privacy mask)?
On the other hand, the dual problem would be: Given a certain privacy level, what is the minimum degree of distortion (and how to design the optimal privacy mask)? The following corollary answers the latter question.

\begin{corollary} \label{corollary1}
	Consider the dynamical system with privacy masks depicted in Fig.~\ref{system2}. Suppose that the properties of $\left\{ \mathbf{n}_{k} \right\}$ can be designed. 
	Then, in order to make sure that the information leakage is upper bounded by a constant $R > 0$ as
	\begin{flalign}
	I_{\infty} \left( \mathbf{x} ; \widehat{\mathbf{y}} \right) \leq R,
	\end{flalign}
	the minimum distortion between $\left\{ \mathbf{y}_{k} \right\}$ and $\left\{ \widehat{\mathbf{y}}_{k} \right\}$ is given by
	\begin{flalign}
	&\inf_{ I_{\infty} \left( \mathbf{x} ; \widehat{\mathbf{y}} \right) \leq R } \mathbb{E} \left[ \left( \mathbf{y}_k - \widehat{\mathbf{y}}_{k} \right)^2 \right]
	= \frac{1}{2 \pi} \int_{0}^{2 \pi} 
	N \left( \omega \right) \mathrm{d} \omega \nonumber \\
	&~~~~ = \frac{1}{2 \pi} \int_{0}^{2 \pi} \left\{ \frac{\eta}{2 \left[ 1 + \sqrt{1 + \frac{\eta}{S_{\mathbf{z}} \left( \omega \right)} } \right]} - \sigma_{\mathbf{v}}^2 \right\}^{+} \mathrm{d} \omega,
	\end{flalign} 
	where $\eta \geq 0$ satisfies
	\begin{flalign}
	\frac{1}{2 \pi} \int_{0}^{2 \pi} \log \sqrt{ 1 + \frac{S_{\mathbf{z}} \left( \omega \right)}{N \left( \omega \right) + \sigma_{\mathbf{v}}^2}} \mathrm{d} \omega 
	= R.
	\end{flalign} 
	Herein, 
	\begin{flalign} \label{spectrum2}
	N \left( \omega \right) = \left\{ \frac{\eta}{2 \left[ 1 + \sqrt{1 + \frac{\eta}{S_{\mathbf{z}} \left( \omega \right)} } \right]} - \sigma_{\mathbf{v}}^2 \right\}^{+}.
	\end{flalign} 
	Furthermore, in order to achieve this minimum distortion, the noise
	$\left\{ \mathbf{n}_{k} \right\}$ should be chosen as a stationary colored Gaussian process with power spectrum
	\eqref{spectrum2}.
\end{corollary}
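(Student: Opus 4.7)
The plan is to exploit the duality between this corollary and Theorem~\ref{privacy1}: both results trace out the same privacy-distortion frontier, one parameterized by $D$ and the other by $R$. My strategy is therefore to invoke Theorem~\ref{privacy1} and invert the optimal leakage-versus-distortion function, rather than redo the entire Lagrangian derivation from scratch.

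First, I would argue that at the optimum of Corollary~\ref{corollary1} the leakage constraint must be active, i.e., $I_{\infty}(\mathbf{x};\widehat{\mathbf{y}}) = R$; otherwise one could shrink the mask (e.g., scale $\{\mathbf{n}_k\}$) and strictly decrease the distortion while remaining feasible. Next, let $R^{*}(D)$ denote the right-hand side of \eqref{leakage1} as established by Theorem~\ref{privacy1}. Using the explicit water-level parameterization in \eqref{spectrum1}, I would verify that $D(\eta) = (2\pi)^{-1}\int_0^{2\pi} N(\omega)\,\mathrm{d}\omega$ is continuous and non-decreasing in $\eta \geq 0$, while $R^{*}(D(\eta))$ is continuous and non-increasing in $\eta$; both are strictly monotone on the regime where $N(\omega) > 0$ on a set of positive measure. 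Hence the map $D \mapsto R^{*}(D)$ is invertible on that regime.

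Given any feasible $R > 0$, let $\eta$ be the unique value solving the leakage equation in the Corollary statement, and let $D^{*}$ and $N(\omega)$ be the corresponding distortion and spectrum defined through $\eta$ via \eqref{spectrum2}. Then by Theorem~\ref{privacy1}, the stationary colored Gaussian mask with spectrum \eqref{spectrum2} achieves leakage exactly $R^{*}(D^{*}) = R$ with distortion $D^{*}$, hence is feasible for Corollary~\ref{corollary1}. Conversely, any mask achieving $D' < D^{*}$ would, by the optimality statement of Theorem~\ref{privacy1} applied with budget $D'$, incur leakage at least $R^{*}(D') > R^{*}(D^{*}) = R$, violating feasibility. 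Therefore $D^{*}$ is the minimum distortion and the stated mask is optimal.

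The main obstacle is handling degenerate cases of the strict monotonicity — specifically, frequency bands where $S_{\mathbf{z}}(\omega)=0$ or where the water-level threshold drives $N(\omega) \equiv 0$ on a set of positive measure — and confirming that the $\eta$ produced by the leakage equation in the Corollary coincides with the $\eta$ arising from the distortion equation in Theorem~\ref{privacy1} at $D = D^{*}$. If one prefers a self-contained derivation rather than inversion, redoing the finite-dimensional KKT analysis with Lagrangian $\sum_{i} N_{i} + \mu \bigl[\sum_{i} \tfrac{1}{2}\log(1 + \lambda_{i}/(N_{i}+\sigma_{\mathbf{v}}^{2})) - (k+1)R\bigr]$ yields the identical closed form for $N_{i}$, since the stationarity condition in $N_{i}$ is structurally the same as in Theorem~\ref{privacy1} with the roles of objective and constraint swapped; the parallel-channel reduction and the Toeplitz spectral-density limit as $k\to\infty$ then proceed verbatim.
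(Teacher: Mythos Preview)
Your proposal is correct and aligned with the paper's approach: the paper's own proof is a single sentence stating that the argument ``follows steps similar to those in the proof of Theorem~\ref{privacy1} in a dual manner,'' and both routes you sketch---inverting the monotone leakage--distortion frontier $D\mapsto R^{*}(D)$, or redoing the KKT analysis with objective and constraint swapped---are legitimate ways to make that duality precise. The paper most likely intends the latter (your fallback), but your primary inversion argument is equally valid and arguably cleaner since it reuses Theorem~\ref{privacy1} as a black box rather than replicating its internals.
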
  

\begin{proof}
	The proof follows steps similar to those in the proof of Theorem~\ref{privacy1} in a dual manner. 
\end{proof}


Note that in the extreme case of when $\left\{ \mathbf{z}_{k} \right\}$ is stationary white Gaussian, that is, when $A = 0$, Corollary~\ref{corollary1} reduces to 
\begin{flalign} 
\inf_{ I_{\infty} \left( \mathbf{x} ; \widehat{\mathbf{y}} \right) \leq R } \mathbb{E} \left[ \left( \mathbf{y}_k - \widehat{\mathbf{y}}_{k} \right)^2 \right]
= \frac{C  W  C^{\mathrm{T}}}{2^{2 R } - 1} - \sigma_{\mathbf{v}}^2.
\end{flalign} 

Equivalently, the tradeoffs captured in Theorem~\ref{privacy1} and Corollary~\ref{corollary1} can instead be expressed using the Lagrangian formulation as 
\begin{flalign}
\inf_{p_{\mathbf{n}}} 
\left\{ I_{\infty} \left( \mathbf{x} ; \widehat{\mathbf{y}} \right) 
+ \alpha \mathbb{E} \left[ \left( \mathbf{y}_k - \widehat{\mathbf{y}}_{k} \right)^2 \right] 
\right\}
,
\end{flalign} 
or
\begin{flalign}
\inf_{p_{\mathbf{n}}} 
\left\{ \mathbb{E} \left[ \left( \mathbf{y}_k - \widehat{\mathbf{y}}_{k} \right)^2 \right] 
+ \beta I_{\infty} \left( \mathbf{x} ; \widehat{\mathbf{y}} \right) 
\right\}
,
\end{flalign} 
where $\alpha, \beta > 0$ denote the tradeoff parameters.

\subsection{Output Power Constraint}

Consider next the case of output power constraint.

\begin{corollary} \label{privacy2}
	Consider the dynamical system with privacy masks depicted in Fig.~\ref{system2}. Suppose that the properties of $\left\{ \mathbf{n}_{k} \right\}$ can be designed subject to a masked output power constraint 
	\begin{flalign}
	\mathbb{E} \left[ \widehat{\mathbf{y}}_k^2 \right] \leq \widehat{Y}.
	\end{flalign}
	Then, in order to minimize the information leakage rate
	\begin{flalign}
	I_{\infty} \left( \mathbf{x} ; \widehat{\mathbf{y}} \right),
	\end{flalign}
	the noise
	$\left\{ \mathbf{n}_{k} \right\}$ should be chosen as a stationary colored Gaussian process. In addition, the power spectrum of $\left\{ \mathbf{n}_{k} \right\}$ should be chosen as
	\begin{flalign}
	N \left( \omega \right) = \left\{ \frac{\eta}{2 \left[ 1 + \sqrt{1 + \frac{\eta}{S_{\mathbf{z}} \left( \omega \right)} } \right]} - \sigma_{\mathbf{v}}^2 \right\}^{+},
	\end{flalign} 
	where $\eta \geq 0$ satisfies
	\begin{flalign}
	&\frac{1}{2 \pi} \int_{0}^{2 \pi} N \left( \omega \right) \mathrm{d} \omega \nonumber \\
	&~~~~ = \frac{1}{2 \pi} \int_{0}^{2 \pi} \left\{ \frac{\eta}{2 \left[ 1 + \sqrt{1 + \frac{\eta}{S_{\mathbf{z}} \left( \omega \right)} } \right]} - \sigma_{\mathbf{v}}^2 \right\}^{+} \mathrm{d} \omega \nonumber \\
	&~~~~ = \widehat{Y} - \frac{1}{2 \pi} \int_{0}^{2 \pi} S_{\mathbf{z}} \left( \omega \right) \mathrm{d} \omega - \sigma_{\mathbf{v}}^2.
	\end{flalign} 
	Correspondingly, the minimum information leakage rate is given by
	\begin{flalign}
	\inf_{\mathbb{E} \left[ \widehat{\mathbf{y}}_k^2 \right] \leq \widehat{Y}} I_{\infty} \left( \mathbf{x} ; \widehat{\mathbf{y}} \right)
	= \frac{1}{2 \pi} \int_{0}^{2 \pi} \log \sqrt{ 1 + \frac{S_{\mathbf{z}} \left( \omega \right)}{N \left( \omega \right) + \sigma_{\mathbf{v}}^2} } \mathrm{d} \omega.
	\end{flalign} 
\end{corollary}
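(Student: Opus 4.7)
The plan is to reduce the output power constrained problem to the distortion (equivalently, noise power) constrained problem that Theorem~\ref{privacy1} already settles. The key observation is that in the virtual channel model
\begin{flalign}
\widehat{\mathbf{y}}_k = \mathbf{z}_k + \mathbf{v}_k + \mathbf{n}_k, \nonumber
\end{flalign}
the three summands $\left\{ \mathbf{z}_k \right\}$, $\left\{ \mathbf{v}_k \right\}$, $\left\{ \mathbf{n}_k \right\}$ are mutually independent and zero-mean, so the second moments simply add. Thus the output power constraint $\mathbb{E}[\widehat{\mathbf{y}}_k^2] \leq \widehat{Y}$ is equivalent to the noise power constraint
\begin{flalign}
\mathbb{E}[\mathbf{n}_k^2] \leq \widehat{Y} - \mathbb{E}[\mathbf{z}_k^2] - \sigma_{\mathbf{v}}^2 = \widehat{Y} - \frac{1}{2\pi}\int_0^{2\pi} S_{\mathbf{z}}(\omega)\,\mathrm{d}\omega - \sigma_{\mathbf{v}}^2. \nonumber
\end{flalign}
Denote the right-hand side by $D$. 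Note that $\mathbb{E}[\mathbf{z}_k^2]$ and $\sigma_{\mathbf{v}}^2$ are pre-given constants, independent of the design of $\left\{\mathbf{n}_k\right\}$, so the feasible sets of the two problems coincide exactly.

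Once this equivalence is in hand, I would invoke the argument behind Theorem~\ref{privacy1} verbatim. Specifically, since the distortion constraint $\mathbb{E}[(\mathbf{y}_k - \widehat{\mathbf{y}}_k)^2] \leq D$ was shown to be equivalent to $\mathbb{E}[\mathbf{n}_k^2] \leq D$, and since the information leakage rate $I_\infty(\mathbf{x}; \widehat{\mathbf{y}})$ depends on $\left\{\mathbf{n}_k\right\}$ only through its distribution (not through any absolute offset), the minimization of $I_\infty(\mathbf{x}; \widehat{\mathbf{y}})$ over all $\left\{\mathbf{n}_k\right\}$ satisfying the output power constraint is identical to the minimization over the corresponding noise power constraint. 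Therefore the optimal $\left\{\mathbf{n}_k\right\}$ is again stationary colored Gaussian, with the same water-filling-type power spectrum
\begin{flalign}
N(\omega) = \left\{ \frac{\eta}{2\left[1 + \sqrt{1 + \frac{\eta}{S_{\mathbf{z}}(\omega)}}\right]} - \sigma_{\mathbf{v}}^2 \right\}^{+}, \nonumber
\end{flalign}
and $\eta \geq 0$ is determined by enforcing $\frac{1}{2\pi}\int_0^{2\pi} N(\omega)\,\mathrm{d}\omega = D$, which is precisely the condition stated in the corollary. Substituting this spectrum back into \eqref{leakage1} yields the claimed minimum leakage formula.

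There is essentially no hard step here beyond correctly translating the constraint: the Gaussianity of the optimal $\mathbf{n}$, the diagonalization in a common basis (Toeplitz $\to$ spectrum) argument, and the Kuhn--Tucker analysis that produces the thresholded water-filling solution are all already done once in the proof of Theorem~\ref{privacy1} and apply without modification because the objective functional and the structure of the constraint (a bound on $\mathbb{E}[\mathbf{n}_k^2]$) are unchanged. The only mild subtlety worth remarking on is the implicit feasibility condition $\widehat{Y} \geq \frac{1}{2\pi}\int_0^{2\pi} S_{\mathbf{z}}(\omega)\,\mathrm{d}\omega + \sigma_{\mathbf{v}}^2$, without which no admissible $\left\{\mathbf{n}_k\right\}$ exists; under this mild assumption, $D \geq 0$ and Theorem~\ref{privacy1} applies directly, completing the proof.
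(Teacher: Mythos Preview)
Your proposal is correct and follows essentially the same approach as the paper: you translate the masked output power constraint into an equivalent noise power constraint via independence (second moments add), set $D = \widehat{Y} - \frac{1}{2\pi}\int_0^{2\pi} S_{\mathbf{z}}(\omega)\,\mathrm{d}\omega - \sigma_{\mathbf{v}}^2$, and then invoke Theorem~\ref{privacy1}. The paper's proof is identical in spirit, writing the spectral decomposition $S_{\widehat{\mathbf{y}}}(\omega) = S_{\mathbf{z}}(\omega) + \sigma_{\mathbf{v}}^2 + N(\omega)$ and integrating; your remark on the implicit feasibility condition $D \geq 0$ is a small extra observation not stated explicitly in the paper.
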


\vspace{3mm}

\begin{proof}
	Since $\left\{ \mathbf{z}_{k} \right\}$, $\left\{ \mathbf{v}_{k} \right\}$, and $\left\{ \mathbf{n}_{k} \right\}$ are mutually independent, we have 
	\begin{flalign}
	S_{\widehat{\mathbf{y}}} \left( \omega \right) 
	&= S_{\mathbf{z} + \mathbf{v} + \mathbf{n}} \left( \omega \right) 
	= S_{\mathbf{z}} \left( \omega \right) + S_{\mathbf{v}} \left( \omega \right) + N \left( \omega \right) \nonumber \\
	& = S_{\mathbf{z}} \left( \omega \right) + \sigma_{\mathbf{v}}^2 + N \left( \omega \right), \nonumber
	\end{flalign} 
	and thus
	\begin{flalign}
	\mathbb{E} \left[ \mathbf{n}_k^2 \right]
	&= \frac{1}{2 \pi} \int_{0}^{2 \pi} N \left( \omega \right) \mathrm{d} \omega  \nonumber \\  
	&= \frac{1}{2 \pi} \int_{0}^{2 \pi} S_{\mathbf{y}} \left( \omega \right) \mathrm{d} \omega  - \frac{1}{2 \pi} \int_{0}^{2 \pi} S_{\mathbf{z}} \left( \omega \right) \mathrm{d} \omega  - \sigma_{\mathbf{v}}^2 \nonumber \\
	&= \mathbb{E} \left[ \mathbf{y}_k^2 \right] - \frac{1}{2 \pi} \int_{0}^{2 \pi} S_{\mathbf{z}} \left( \omega \right) \mathrm{d} \omega - \sigma_{\mathbf{v}}^2 \nonumber \\
	&\leq Y - \frac{1}{2 \pi} \int_{0}^{2 \pi} S_{\mathbf{z}} \left( \omega \right) \mathrm{d} \omega - \sigma_{\mathbf{v}}^2. \nonumber
	\end{flalign} 
	Then, Corollary~\ref{privacy2} follows by invoking Theorem~\ref{privacy1}.
\end{proof}

We may again consider the following dual problem.

\begin{corollary}
	Consider the dynamical system with privacy masks depicted in Fig.~\ref{system2}. Suppose that the properties of $\left\{ \mathbf{n}_{k} \right\}$ can be designed.
	Then, in order to make sure that the information leakage is upper bounded by a constant $R > 0$ as
	\begin{flalign}
	I_{\infty} \left( \mathbf{x} ; \widehat{\mathbf{y}} \right) \leq R,
	\end{flalign}
	the minimum power of the masked data $\left\{ \widehat{\mathbf{y}}_{k} \right\}$ is given by
	\begin{flalign}
	&\inf_{ I_{\infty} \left( \mathbf{x} ; \widehat{\mathbf{y}} \right) \leq R }\mathbb{E} \left[ \widehat{\mathbf{y}}_{k}^2 \right] \nonumber \\
	&~~~~ = \frac{1}{2 \pi} \int_{0}^{2 \pi} N \left( \omega \right) \mathrm{d} \omega + \frac{1}{2 \pi} \int_{0}^{2 \pi} S_{\mathbf{z}} \left( \omega \right) \mathrm{d} \omega + \sigma_{\mathbf{v}}^2 \nonumber \\
	&~~~~ = \frac{1}{2 \pi} \int_{0}^{2 \pi} \left\{ \frac{\eta}{2 \left[ 1 + \sqrt{1 + \frac{\eta}{S_{\mathbf{z}} \left( \omega \right)} } \right]} - \sigma_{\mathbf{v}}^2 \right\}^{+} \mathrm{d} \omega  \nonumber \\
	&~~~~~~~~ + \frac{1}{2 \pi} \int_{0}^{2 \pi} S_{\mathbf{z}} \left( \omega \right) \mathrm{d} \omega + \sigma_{\mathbf{v}}^2,
	\end{flalign} 
	where 
	\begin{flalign}
	N \left( \omega \right) = \left\{ \frac{\eta}{2 \left[ 1 + \sqrt{1 + \frac{\eta}{S_{\mathbf{z}} \left( \omega \right)} } \right]} - \sigma_{\mathbf{v}}^2 \right\}^{+},
	\end{flalign}
	and
	$\eta \geq 0$ satisfies
	\begin{flalign}
	\frac{1}{2 \pi} \int_{0}^{2 \pi} \log \sqrt{ 1 + \frac{S_{\mathbf{z}} \left( \omega \right)}{N \left( \omega \right) + \sigma_{\mathbf{v}}^2} } \mathrm{d} \omega
	= R.
	\end{flalign} 
	Furthermore, in order to achieve this minimum distortion, the noise
	$\left\{ \mathbf{n}_{k} \right\}$ should be chosen as a stationary colored Gaussian process with power spectrum
	\begin{flalign}
	N \left( \omega \right) = \frac{\eta}{2 \left[ 1 + \sqrt{1 + \frac{\eta}{S_{\mathbf{x}} \left( \omega \right)}} \right]}.
	\end{flalign} 
\end{corollary}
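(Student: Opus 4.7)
The plan is to reduce this dual corollary to Corollary~\ref{corollary1} by exploiting the fact that the masked output power differs from the privacy-mask power only by a system-determined constant. Since $\left\{ \mathbf{z}_{k} \right\}$, $\left\{ \mathbf{v}_{k} \right\}$, and $\left\{ \mathbf{n}_{k} \right\}$ are mutually independent and stationary, variance decomposition (exactly as used in the proof of Corollary~\ref{privacy2}) yields
\begin{flalign}
\mathbb{E} \left[ \widehat{\mathbf{y}}_k^2 \right]
= \mathbb{E} \left[ \mathbf{n}_k^2 \right] + \frac{1}{2 \pi} \int_{0}^{2 \pi} S_{\mathbf{z}} \left( \omega \right) \mathrm{d} \omega + \sigma_{\mathbf{v}}^2,
\end{flalign}
in which the last two terms are pre-determined by the system and cannot be altered by any choice of mask. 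Hence, minimizing $\mathbb{E} \left[ \widehat{\mathbf{y}}_k^2 \right]$ subject to $I_{\infty} \left( \mathbf{x} ; \widehat{\mathbf{y}} \right) \leq R$ is \emph{equivalent} to minimizing $\mathbb{E} \left[ \mathbf{n}_k^2 \right] = \mathbb{E} \left[ \left( \mathbf{y}_k - \widehat{\mathbf{y}}_{k} \right)^2 \right]$ under the same leakage constraint.

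Next, I would invoke Corollary~\ref{corollary1} directly on the reduced problem. This immediately returns that the optimal $\left\{ \mathbf{n}_{k} \right\}$ must be stationary colored Gaussian with the thresholded ``obfuscating'' power spectrum $N \left( \omega \right)$ as stated, with $\eta \geq 0$ chosen so that the leakage-rate integral equals $R$, and that the corresponding minimum distortion is $\frac{1}{2 \pi} \int_{0}^{2 \pi} N \left( \omega \right) \mathrm{d} \omega$. Adding back the fixed offset $\frac{1}{2 \pi} \int_{0}^{2 \pi} S_{\mathbf{z}} \left( \omega \right) \mathrm{d} \omega + \sigma_{\mathbf{v}}^2$ then recovers the claimed formula for the minimum $\mathbb{E} \left[ \widehat{\mathbf{y}}_k^2 \right]$, and the optimal mask is the same colored Gaussian process.

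The one delicate point worth checking is that a suitable $\eta \geq 0$ realizing the leakage constraint with equality actually exists. This reduces to noting monotonicity of the map $\eta \mapsto \frac{1}{2 \pi} \int_{0}^{2 \pi} \log \sqrt{ 1 + S_{\mathbf{z}} \left( \omega \right) / \left[ N \left( \omega \right) + \sigma_{\mathbf{v}}^2 \right] } \mathrm{d} \omega$: as $\eta$ grows, $N \left( \omega \right)$ grows pointwise on its support, so the leakage-rate integral decreases continuously from its unconstrained value toward $0$, making any positive $R$ below that value attainable. No genuine obstacle is anticipated; the main conceptual move is simply the variance-decomposition identity that converts the output-power constraint into a mean-square distortion constraint on $\left\{ \mathbf{n}_{k} \right\}$, after which Corollary~\ref{corollary1} does all the work.
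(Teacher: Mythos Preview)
Your proposal is correct and matches the paper's intended approach: the paper states this corollary as the dual to Corollary~\ref{privacy2} without an explicit proof, and the natural argument is precisely the one you give---combine the variance decomposition from the proof of Corollary~\ref{privacy2} with Corollary~\ref{corollary1}.
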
  

\vspace{3mm}

\section{CONCLUSIONS}

In this paper, we have derived analytical formulas for the fundamental limits of obfuscation in terms of privacy-distortion tradeoffs for linear Gaussian dynamical systems with an  information-theoretic analysis. In addition, we have also obtained explicit ``thresholded obfuscating" power allocation solutions on how to design the optimal privacy masks.

Potential future research directions include the analysis of non-Gaussian noises, as well as  investigating the implications of the results in the context of state estimation and feedback control systems.

\balance

\bibliographystyle{IEEEtran}
\bibliography{references}




\end{document}